\let\doendproof\endproof
\renewcommand\endproof{~\hfill\qed\doendproof}
\begin{document}

\title{Weak Dominance Drawings and \\
Linear Extension Diameter}

\author{Evgenios M.~Kornaropoulos\inst{1,2} and Ioannis G.~Tollis\inst{1,2}}

\institute{Department of Computer Science, University of Crete, Heraklion, Crete, Greece \\
P.O. Box 2208, Heraklion, Crete, GR-71409 Greece 
%\texttt{\{kornarop,tollis\}@csd.uoc.gr }
\and Institute of Computer Science, Foundation for Research and Technology-Hellas, \\
Vassilika Vouton, P.O. Box 1385, Heraklion, GR-71110 Greece \\
\texttt{\{kornarop,tollis\}@ics.forth.gr }}

\maketitle

\begin{abstract}
We introduce the problem of  Weak Dominance Drawing for general directed acyclic graphs and
we show the connection with the linear extension diameter of a partial order $P$. 
We present complexity results and bounds.
\end{abstract}

%%%%%%%%%%%%%%%%%%%%%%%%%%%%%%%%%%%%%%%%%%%%%%%%%%%%%%
%%%%%%%%%%%%%%%%%%%%		SECTION 1 						%%%%%%%%%%%%%%
%%%%%%%%%%%%%%%%%%%%%%%%%%%%%%%%%%%%%%%%%%%%%%%%%%%%%%

\section{Introduction}
Dominance drawings are an important tool  for visualizing planar $st$-graphs. This drawing method 
has many important aesthetic properties, including small number of bends, good vertex placement, and symmetry display \cite{Drawing:book,DiBattista:1992p9010}. 
Furthermore, it encapsulates the aspect of characterizing the transitive closure of the digraph
by means of a geometric dominance relation among the vertices. 
A dominance drawing $\Gamma$ of a planar $st$-graph $G$ is a drawing, such that for any two vertices $u$ and $v$ there is a 
directed path from $u$ to $v$ in $G$ if and only if $X(u)\leq X(v)$ and $Y(u) \leq Y(v)$ in $\Gamma$. 
An asymptotically optimal algorithm to obtain a two-dimensional dominance drawing of a bipartite graph whenever such a drawing exists was presented in \cite{Eades:1994p8711}. 
 
In this paper we introduce the problem of  \textit{Weak Dominance Drawing} by extending the concept
of dominance to general directed acyclic graphs. 
We show the connection of weak dominance drawings with the linear extension diameter of a partial order $P$. 
We also present complexity results and bounds.

%%%%%%%%%%%%%%%%%%%%%%%%%%%%%%%%%%%%%%%%%%%%%%%%%%%%%%
%%%%%%%%%%%%%%%%%%%%		SECTION 2 						%%%%%%%%%%%%%%
%%%%%%%%%%%%%%%%%%%%%%%%%%%%%%%%%%%%%%%%%%%%%%%%%%%%%%
\section{Problem Definition}

The notion of dominance drawing \textit{dimension} of a directed acyclic graph $G$ (denoted as $dim$($G$)) is defined  as the
value of the smallest $k$ for which a $k$-dimensional dominance drawing of $G$ can be obtained~\cite{Eades:1994p8711}. 
The family of  planar $st$-graphs is  a subclass of directed acyclic graphs for which a (2-dimensional) dominance drawing 
 can be efficiently computed in linear time \cite{Drawing:book,DiBattista:1992p9010}.  
This is possible because planar $st$-graphs have dimension 2.
If a graph $G$ has $dim(G)>2$ then there is at least one pair of vertices $u,v\in V$ such that  $X(u)\leq X(v)$ and $Y(u) \leq Y(v)$ in $\Gamma$, while neither $u$ can reach $v$, nor $v$ can reach $u$. 
Thus, a relaxed condition needs to be introduced in order to obtain a dominance drawing for any directed acyclic graph (dag). 
Consider two vertices $u$ and $v$ in a dag $G$ such that there is no path from $u$ to $v$ (or else vertex $v$ is unreachable from $u$) but $X(u)\leq X(v)$ and $Y(u) \leq Y(v)$ in $\Gamma$. 
Then the implied path $(u,v)$ in $\Gamma$ will be called a \textit{falsely implied path} (or simply fip) and will be included in the set of fips implied by drawing $\Gamma$. 
The number of falsely implied paths of drawing $\Gamma$ will be denoted by $fip$($\Gamma$). 
The existence of falsely implied paths is the trade off in drawing  graphs with $dim(G)>2$ in the two dimensional plane using the concept of dominance.

We  introduce the concept of \textit{weak dominance drawing} by relaxing the necessity of the existence of a path: that is, a dominance relation
between coordinates  of vertices in $\Gamma$ does not necessarily indicate the existence of a directed path in $G$, while if a path exists in $G$ then there is a dominance relation between vertex coordinates in $\Gamma$. 
More formally, a weak dominance drawing $\Gamma$ of a directed acyclic graph $G$ is constructed, such that for any two vertices $u$ and $v$ if there is a 
directed path from $u$ to $v$ in $G$ then $X(u)\leq X(v)$ and $Y(u) \leq Y(v)$ in $\Gamma$. 
The challenge in this problem is to minimize the number of fips, $fip$($\Gamma$).
A topological sorting of a dag $G=(V,E)$ is defined as a bijective function $t$ that maps each vertex of $G$ to a number in  $[1,|V|]$,
with the additional property that for every edge $(u,v)\in E$, $t(u)$ is smaller than $t(v)$.
The \textit{intersection} of two topological sortings $t_{X},t_{Y}$ of vertices in $V$ is the set $I=\{(u,v)| t_{X}(u)<t_{X}(v) ,  t_{Y}(u)<t_{Y}(v)\}$.
We can formulate the corresponding decision problem as follows:

\begin{flushleft}
\textbf{(WDD) WEAK DOMINANCE DRAWING FOR DIRECTED ACYCLIC GRAPHS}\\
INSTANCE: A directed acyclic graph $G=(V,E)$ and a positive integer $C$ such that $|$E$|\leq$$C$$\leq \frac{|V|(|V|-1)}{2}$.\\

QUESTION: Does there exist a collection of two topological sortings $t_{X},t_{Y}$   of $V$ such that their intersection has cardinality $C$ or less?\\
\end{flushleft}

A simple example for a crown graph $C_{3}$ is shown in Figure \ref{crown:cases}. 
This graph cannot be drawn so that $fip$($C_{3}$)$=0$, due to the fact that $dim(C_{3})=3$. 
As depicted in Figure \ref{crown:cases}, there is at least one path that is implied by the coordinate assignment of vertices and does not belong to $C_{3}$. 
From the above definition of the Weak Dominance drawing problem it can be easily derived that if a directed acyclic graph $G$ admits a dominance drawing, then $G$ admits a weak dominance drawing $\Gamma$ such that $fip(\Gamma)=0$.

\begin{figure}
\centering
\includegraphics[scale=0.26]{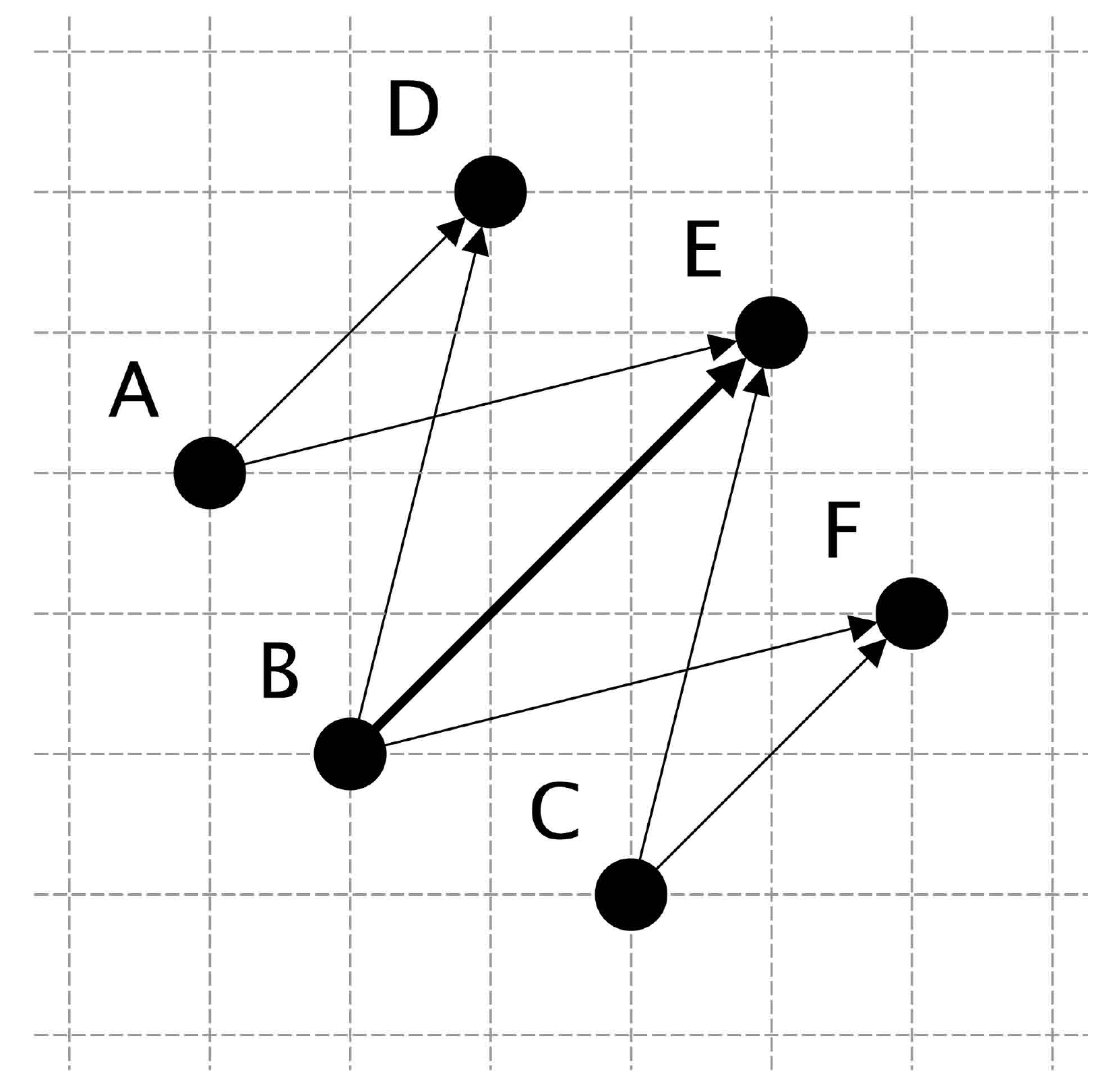}
\includegraphics[scale=0.26]{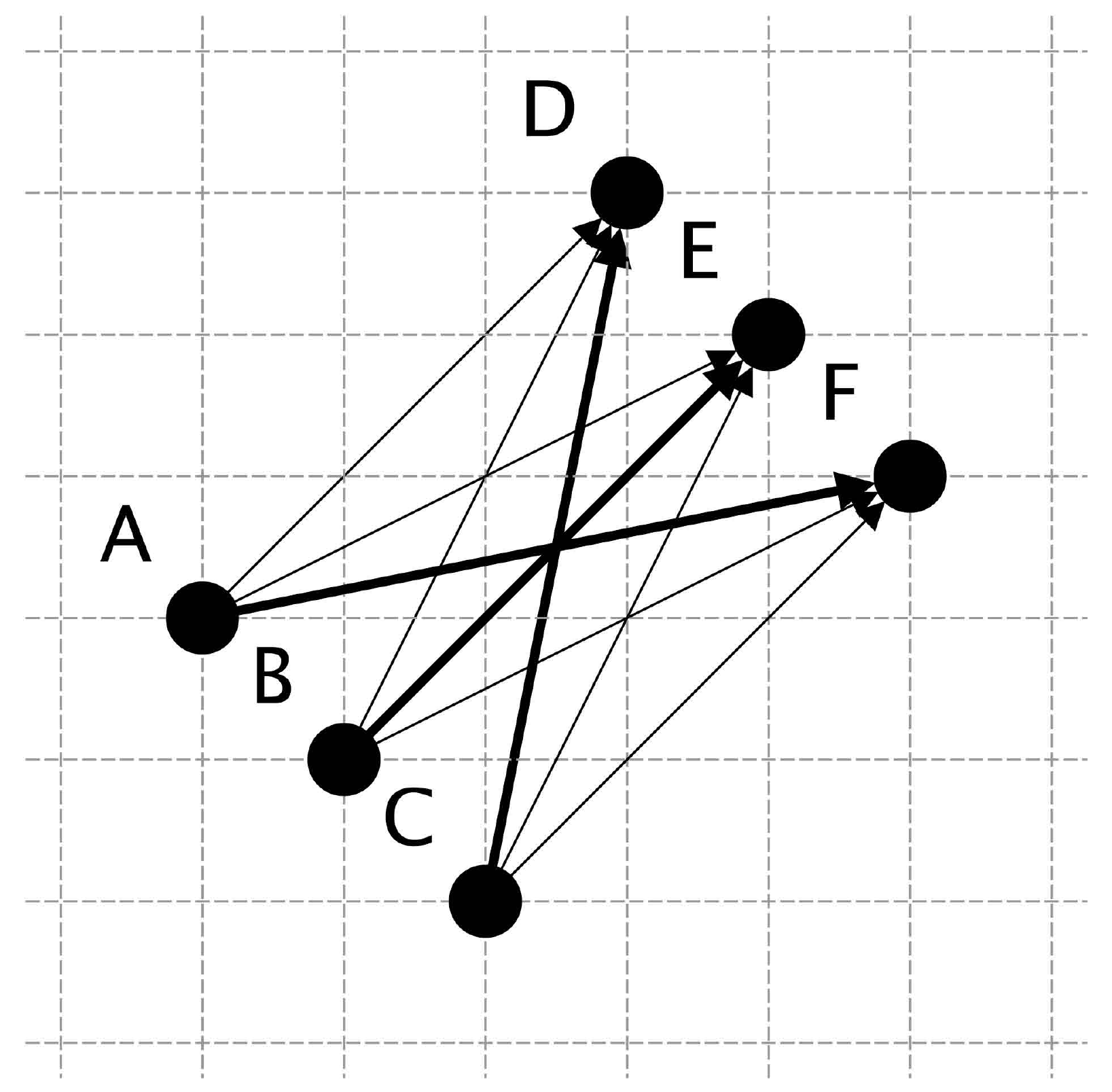}
\caption{Two different weak dominance drawings of crown  graph $C_{3}$.  Falsely implied paths appear with bold. In the left drawing there is an optimal weak dominance drawing with one falsely implied path, while in the right there is a weak dominance drawing with three falsely implied paths. } 
\label{crown:cases}
\end{figure}

An \textit{incomparable pair} of vertices $u,v$ has the property that neither $u$ is reachable from $v$, nor $v$ is reachable from $u$. We will denote as   $inc(G)$ the number of incomparable pairs of a dag $G$. 
The transitive closure of $G$ is the graph $G^{*}$=(V,$E^{*}$), where $E^{*}$=\{( $i$ , $j$ ): if there is a path from vertex $i$ to vertex $j$ in $G$\}. 
\begin{flushleft}
\textbf{Fact 1. }\textit{The number of incomparable pairs of a dag $G$ is:}
\begin{displaymath}
inc(G) = \frac{|V|(|V|-1)}{2}-|E^{*}|.
\end{displaymath}
\end{flushleft}

Clearly, according to the definition of fips, we have the following: 

\begin{flushleft}
\textbf{Fact 2. }\textit{Let $\Gamma$ be a weak dominance drawing of dag $G$. Then:}
\begin{displaymath}
fip(\Gamma) \leq inc(G).
\end{displaymath}
\end{flushleft}

The equality in the above bound is achieved for weak dominance drawings where all vertices are placed on the diagonal,
 i.e., $\forall u\in V, X(u)=Y(u)$. A general upper bound is given in the following:

\begin{lemma}If $G$ is a directed acyclic graph and $\Gamma$ is a weak dominance drawing of $G$, then the following inequality holds:
\begin{displaymath}
 min_{\Gamma}fip(\Gamma)  \leq inc(G) -(dim(G)-2).
\end{displaymath}
\end{lemma}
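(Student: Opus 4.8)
The plan is to treat a weak dominance drawing $\Gamma$ as nothing more than a pair of topological sortings $t_X,t_Y$ of $G$, with $v$ placed at $(t_X(v),t_Y(v))$. Under this identification every falsely implied path comes from an incomparable pair: a comparable pair is ordered the same way by both sortings and its (genuine) dominance is not a fip, while an incomparable pair $\{u,v\}$ yields a fip exactly when $t_X$ and $t_Y$ order $u,v$ the same way, and yields no fip when they order them oppositely. Hence $fip(\Gamma)=inc(G)-k(\Gamma)$, where $k(\Gamma)$ is the number of incomparable pairs reversed between $t_X$ and $t_Y$, and it suffices to produce two topological sortings that reverse at least $\dim(G)-2$ incomparable pairs.

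The structural ingredient I would invoke is the classical inequality $\dim(G)\le \mathrm{width}(G)$, where $\mathrm{width}(G)$ is the largest number of pairwise unreachable vertices of $G$. So fix an antichain $A=\{a_1,\dots,a_w\}$ with $w=\mathrm{width}(G)\ge \dim(G)$. Because $A$ is an antichain, adjoining to $G$ the oriented chain $a_1\to a_2\to\cdots\to a_w$ creates no directed cycle: a shortest such cycle, after deleting one of the new arcs, would contain a pure $G$-path between two distinct vertices of $A$, contradicting that $A$ is an antichain; the same holds for the reverse chain $a_w\to\cdots\to a_1$. Therefore there is a topological sorting $t_X$ with $t_X(a_1)<\cdots<t_X(a_w)$ and a topological sorting $t_Y$ with $t_Y(a_w)<\cdots<t_Y(a_1)$.

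For these two sortings each of the $\binom{w}{2}$ incomparable pairs inside $A$ is reversed, so the resulting drawing $\Gamma$ satisfies $fip(\Gamma)\le inc(G)-\binom{w}{2}$. Since $w\ge\dim(G)$ we have $\binom{w}{2}\ge w-1\ge \dim(G)-1\ge \dim(G)-2$ (the degenerate cases $w\le 1$ force $\dim(G)\le 1$ and $inc(G)=0$, where the claim is immediate), which gives $\min_\Gamma fip(\Gamma)\le inc(G)-(\dim(G)-2)$. In fact this argument yields the stronger bounds $\min_\Gamma fip(\Gamma)\le inc(G)-(\dim(G)-1)$ and even $\min_\Gamma fip(\Gamma)\le inc(G)-\binom{\mathrm{width}(G)}{2}$, which is where the connection to the linear extension diameter of the reachability order becomes visible.

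The main obstacle, in my view, is not any calculation but choosing the right handle on $\dim(G)$. Working directly from a minimal realizer $t_1,\dots,t_{\dim(G)}$ of the reachability order is the natural first attempt, but it stalls: the $\dim(G)-2$ "critical" incomparable pairs that are reversed uniquely by $t_3,\dots,t_{\dim(G)}$ generally cannot be reversed simultaneously by one further extension (already false for the standard example on $8$ elements), so two realizer extensions need not reverse $\dim(G)-2$ pairs. Routing the argument through a maximum antichain via $\dim(G)\le\mathrm{width}(G)$ sidesteps this completely; the remaining steps — the acyclicity check for the "crossed" antichain and the binomial inequality — are routine.
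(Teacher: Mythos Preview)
Your proof is correct and follows a genuinely different route from the paper's. The paper works directly with a minimal realizer $T=\{t_1,\dots,t_d\}$, $d=\dim(G)$: it selects the pair $t_i,t_j\in T$ minimizing fips and asserts that ``each topological sorting from $T\setminus\{t_i,t_j\}$ decreases the number of fips in the intersection by at least one,'' concluding $fip(\Gamma)\le inc(G)-(d-2)$. You instead bypass the realizer, invoke the classical inequality $\dim(G)\le\mathrm{width}(G)$, take a maximum antichain $A$ of size $w$, and build two topological sortings that order $A$ oppositely, reversing all $\binom{w}{2}$ incomparable pairs inside $A$.

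What each approach buys: the paper's argument is short and stays close to the definition of dimension, but its key step is stated rather than justified, and (as you observe) a naive reading of it yields a lower bound on $fip(\{t_i,t_j\})$ rather than an upper one. Your argument is fully constructive and self-contained, and it delivers the strictly stronger conclusions $\min_\Gamma fip(\Gamma)\le inc(G)-\binom{\mathrm{width}(G)}{2}$ and hence $\min_\Gamma fip(\Gamma)\le inc(G)-(\dim(G)-1)$, sharpening the paper's inequality by at least one. One small quibble: your acyclicity check for the augmented graph is a bit terse, since a shortest cycle may use several new chain arcs and deleting just one does not immediately leave a pure $G$-path; the clean fix is to look at the $G$-segment between two \emph{consecutive} new arcs on the cycle, which must join two antichain elements and hence be trivial, forcing the cycle to live entirely on the added path. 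This is the standard fact that any linear order on an antichain extends to a linear extension of the whole poset, and your sketch is easily completed along these lines.
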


\begin{proof}
Let  $T=\{t_{1},t_{2},...,t_{d}\}$ be a set of topological sortings such that $d=dim$($G$). We choose two topological sortings $t_{i}, t_{j}$ from $T$  in order to construct a weak dominance drawing $\Gamma$, such that $\Gamma $ has the minimum number of fips comparing to any other pair of topological sortings in $T$. Since each topological sorting from  $T-\{t_{i},t_{j}\}$ decreases the number of fips in the intersection by at least one, drawing $\Gamma$ will have at most $inc(G)-(d-2)$ fips. Notice that although pair $t_{i},t_{j}$ is such that it yields the minimum number of fips out of the topological sortings in $T$, there could be another pair, not in $T$, that has an even lower number of fips. Thus, the inequality  $min_{\Gamma}fip(\Gamma)  \leq inc(G) -(dim(G)-2)$ holds.
 \end{proof}
 
 Another upper bound for the minimum number of fips for directed acyclic graphs is:
 
\begin{lemma}Let $G$ be a directed acyclic graph and $\Gamma$ be a weak dominance drawing of $G$, then the following inequality holds:
 \begin{displaymath}
 min_{\Gamma}fip(\Gamma)  \leq inc(G) -\lceil\frac{2inc(G)}{dim(G)}\rceil
\end{displaymath}
 \end{lemma}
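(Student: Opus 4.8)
The plan is to work with a realizer of the reachability poset and finish with a double-counting (averaging) argument over pairs of linear extensions. Fix a set $T=\{t_{1},\dots,t_{d}\}$ of topological sortings with $d=dim(G)$ whose common intersection equals the transitive closure $E^{*}$; such a set is exactly what a $d$-dimensional dominance drawing yields once each coordinate is replaced by the permutation (topological sorting) it induces. For any two indices $i\neq j$, setting $X=t_{i}$ and $Y=t_{j}$ produces a weak dominance drawing $\Gamma_{ij}$, and by the argument already used in Fact~2 its fips are precisely those incomparable pairs $\{u,v\}$ that $t_{i}$ and $t_{j}$ order in the same direction. Hence $\min_{\Gamma}fip(\Gamma)\leq\min_{i\neq j}fip(\Gamma_{ij})$, and it suffices to exhibit one index pair that ``splits'' (orders in opposite directions) enough incomparable pairs.

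Next I would count splits. Take any incomparable pair $p$. Because the intersection of all of $t_{1},\dots,t_{d}$ contains only comparable pairs, both orientations of $p$ must occur among the $t_{k}$; write $a_{p}$ and $d-a_{p}$ for the number of sortings realizing each orientation, so $1\leq a_{p}\leq d-1$. The number of index pairs $\{i,j\}$ on which the corresponding two sortings disagree about $p$ is exactly $a_{p}(d-a_{p})$. Summing over all $inc(G)$ incomparable pairs, the number of incidences (incomparable pair, index pair) at which the incomparable pair is split is at least $inc(G)\cdot\min_{1\leq a\leq d-1}a(d-a)=inc(G)\,(d-1)$.

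Then I average: there are $\binom{d}{2}=d(d-1)/2$ index pairs, so some index pair $\{i,j\}$ splits at least $inc(G)(d-1)\big/\tfrac{d(d-1)}{2}=\tfrac{2\,inc(G)}{d}$ incomparable pairs, hence, being an integer, at least $\lceil\tfrac{2\,inc(G)}{d}\rceil$ of them. For that drawing, $fip(\Gamma_{ij})=inc(G)-(\text{number of split pairs})\leq inc(G)-\lceil\tfrac{2\,inc(G)}{d}\rceil$, and combining with the first paragraph gives the claimed inequality.

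\textbf{Main obstacle.} There is essentially one nonroutine point: verifying $a(d-a)\geq d-1$ for every integer $a$ with $1\leq a\leq d-1$, which reduces to the convexity inequality $(a-1)(d-1-a)\geq 0$ and is minimized at the extreme values $a=1$ and $a=d-1$. The only other thing to be careful about is that the chosen set $T$ is a genuine realizer, i.e.\ its intersection is exactly $E^{*}$, so that comparable pairs never contribute fips in any $\Gamma_{ij}$ and every incomparable pair really does receive at least one orientation of each kind; once that is in place the rest is pure double counting and I expect no further difficulty.
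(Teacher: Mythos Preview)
Your proposal is correct and follows essentially the same approach as the paper: both fix a realizer $T=\{t_1,\dots,t_d\}$ with $d=\dim(G)$ and average over the $\binom{d}{2}$ pairs of sortings to find one pair that splits at least $2\,inc(G)/d$ incomparable pairs. The paper phrases this as a probabilistic argument (choosing $t_i,t_j$ uniformly at random and computing the expected number of splits), whereas you phrase it as deterministic double counting; these are the same computation, and your version is in fact the more carefully written of the two.
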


 \begin{proof}
Let  $T=\{t_{1},t_{2},...,t_{d}\}$ be a set of topological sortings such that $d=dim(G)$. Let us randomly choose two $t_{i}, t_{j}$ from $T$ in order to construct weak dominance drawing $\Gamma$. The probability that an incomparable pair of vertices $u,v$ is expressed as a fip in $\Gamma$ is at least $(d-1)/ \binom{d}{2}$. Thus the expected number of fips is $\sum_{i=1}^{inc(G)}(d-1)/\binom{d}{2}=2inc(G)/d$. Therefore in an optimal drawing  $min_{\Gamma}fip(\Gamma) $ will be less than  $inc(G) -\lceil\frac{2inc(G)}{dim(G)}\rceil$. 
\end{proof}

%%%%%%%%%%%%%%%%%%%%%%%%%%%%%%%%%%%%%%%%%%%%%%%%%%%%%%
%%%%%%%%%%%%%%%%%%%%		SECTION 3						%%%%%%%%%%%%%%
%%%%%%%%%%%%%%%%%%%%%%%%%%%%%%%%%%%%%%%%%%%%%%%%%%%%%%

\section{Linear Extension Diameter in Partially Ordered Sets}

In this section we present the LINEAR EXTENSION DIAMETER (LED) problem and we describe how it is 
related to the WDD problem. 
Due to the fact that the LED problem is proved to be NP-complete we will conclude that 
the WDD problem is also NP-complete.

%The field of Order Theory studies various types of binary relations between elements. 
A linear order (or total order) $L$ is a binary relation on some set $A$, with the properties of (a) antisymmetry, (b)  transitivity and also (c)  the property that every pairing of elements of $A$ must be related by $L$. 
A \textit{partially ordered set} (or poset) is a pair ($A$, $P$) where $A$ is a set of elements and $P$ is a reflexive, antisymmetric, and transitive binary relation on the elements of $A$. 
We call $A$ the ground set while $P$ is a partial order on $A$. 
%Clearly, not every pair of elements in a poset needs to be related. 
Sometimes we refer to a poset simply as partial order $P$. 

A total order $L$ is a \textit{linear extension} of a partial order $P$ if, whenever $x \leq y$ in $P$ it also holds that $x \leq y$ in $L$. 
To put it briefly linear extensions are permutations that do not violate the given binary relations between pairs of  $P$. 
The distance between two linear extensions $L_{i},L_{j}$ of $P$, denoted by $dist(L_{i},L_{j})$, is the number of pairs of elements that are in opposite order between the two linear extensions. 
Given a poset $(A,\leq)$, a pair $L_{1},L_{2}$ of linear extensions is called a diametral pair if it maximizes the distance among all pairs of linear extensions of $P$. 
The maximal distance is called \textit{linear extension diameter of P} and is denoted by $led$($P$).

 It has been proved that every partial order $P$  is the intersection of a family of linear orders. 
The minimum number of linear orders required in such a representation of $P$ is the dimension of $P$, $dim(P)$~\cite{posets:DushnikMiler,Trotter:book}. 
The complexity of finding the dimension of a partial order was proved to be NP-complete by Yannakakis \cite{Yannak:poset}.

\begin{figure}
\includegraphics[scale=0.57]{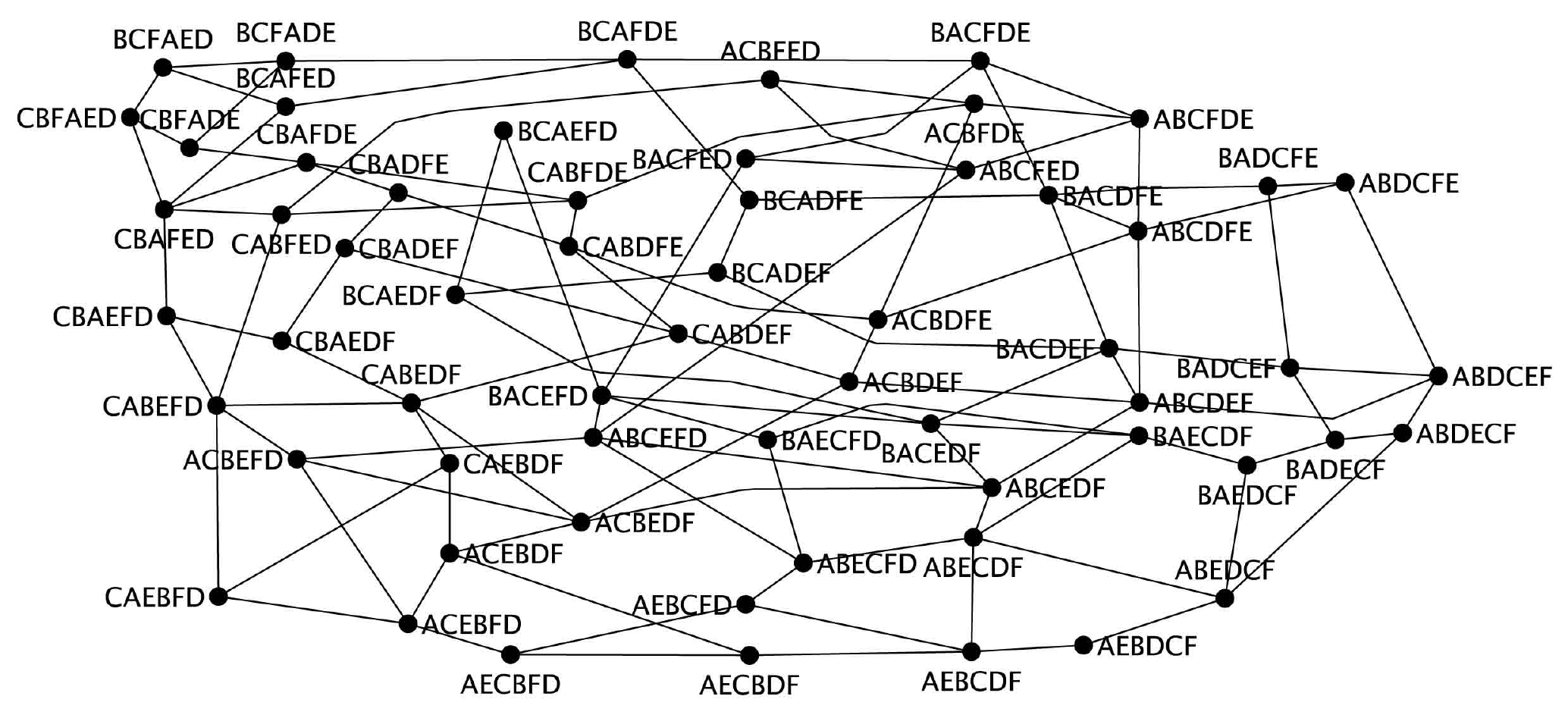}
\caption{The linear extension graph of the partial order derived from crown graph $C_{3}$. It has 61 vertices and 114 edges. A diametral pair is $L_{1}=ABDCEF$ and $L_{2}=CBFAED$ where $dist(L_{1}, L_{2})=8$ out of $inc(C_{3})=9$ incomparable pairs.}
\label{fig:example}
\end{figure}

The \textit{linear extension graph} $G(P)$ is the undirected graph that has a vertex for every different linear extension of $P$~\cite{Felsner:1999p14420}. 
Two vertices $L_{i},L_{j}$ of $G(P)$ are connected by an (undirected) edge if the linear extensions differ only by a single adjacent transposition. 
The problem of counting the number of vertices of a linear extension graph belongs to compexity class \#P, due to the fact that counting 
linear extensions also belongs to \#P \cite{Brightwell:1991p14245}. 
The distance between two linear extensions $L_{i},L_{j}$ of $P$, is equal to the shortest path distance between the corresponding vertices in $G(P)$~\cite{Naatz:led2000}.

If $P$ has dimension 2 there are two linear extensions $L_{i},L_{j}$ such that every pair $a,b$ of elements that cannot be compared, 
appear as $a\leq_{L_{i}}b$ in one linear extension and as $b\leq_{L_{j}}a$ in the other. 
Due to the fact that every incomparable pair of elements is in the opposite order between $L_{i}$ and $L_{j}$, the distance $dist(L_{i},L_{j})$
is equal to  the number of pairs of elements that cannot be compared. 
Since we have already considered every incomparable pair, this is also the maximal distance. 
Therefore, the linear extension diameter is equal to the number of pairs of elements that cannot be compared. 
The formulation of the LED problem is as follows:

\begin{flushleft}
\textbf{(LED) LINEAR EXTENSION DIAMETER  }\\
INSTANCE: A finite poset $(A,P)$, and a natural number $k$.\\
QUESTION: Are there two Linear Extensions of $P$ with distance at least $k$?\\
\end{flushleft}

In their work Brightwell and Massow \cite{Brightwell:2008p14422} proved that LED is NP-complete.

\subsection{Equivalence with Weak Dominance Drawing}

A partially ordered set can be viewed as a transitive directed acyclic graph with a set of nodes $A$ and an edge ($u,v$)$\in E$
for each pairwise relation between $u <_{P} v$ in $P$.
In these terms, a linear order is a complete dag. 
Linear extensions can be interpreted as topological sortings in graph theory. 
Therefore we can prove the following theorem:

\begin{theorem}
The LINEAR EXTENSION DIAMETER problem reduces to the WEAK DOMINANCE DRAWING FOR DIRECTED ACYCLIC GRAPHS problem.
\end{theorem}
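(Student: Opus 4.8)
The plan is to exhibit a polynomial-time reduction that, given an instance of LED — a finite poset $(A,P)$ and a natural number $k$ — produces an instance of WDD, namely a dag $G$ and an integer $C$, so that the answer to the LED instance is ``yes'' iff the answer to the WDD instance is ``yes''. The natural choice is to take $G$ to be the transitive dag representing $P$ (a node for each element of $A$, an edge $(u,v)$ for each relation $u <_P v$), and to set the threshold $C$ appropriately in terms of $k$ and $inc(G)$. Observe that $inc(G)$ equals the number of incomparable pairs of $P$, which is exactly the quantity that appears as the maximum possible distance between linear extensions. Since linear extensions of $P$ are precisely the topological sortings of $G$, a pair of linear extensions $L_X, L_Y$ corresponds to a pair of topological sortings, and hence to a weak dominance drawing $\Gamma$.

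The crux is a counting identity relating the two objective functions. For an incomparable pair $\{u,v\}$ and a pair of topological sortings $t_X, t_Y$, exactly one of the following happens: the pair is ``agreed upon'' by both orderings (it lies in the intersection $I$, i.e. it is a fip of $\Gamma$), or it is ``reversed'' between $t_X$ and $t_Y$ (it contributes to $dist(L_X, L_Y)$). For comparable pairs, both topological sortings must order them the same way, so every comparable pair lies in $I$ and contributes $|E^*|$ to $|I|$ regardless of the choice. Writing $n = |V|$, this gives the identity
\begin{displaymath}
fip(\Gamma) = |E^*| + \bigl(inc(G) - dist(L_X, L_Y)\bigr) = \binom{n}{2} - dist(L_X, L_Y).
\end{displaymath}
Hence minimizing $fip$ over all weak dominance drawings is exactly equivalent to maximizing $dist$ over all pairs of linear extensions, i.e. $\min_{\Gamma} fip(\Gamma) = \binom{n}{2} - led(P)$. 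In particular, there exist two linear extensions of $P$ at distance at least $k$ if and only if there is a weak dominance drawing with at most $\binom{n}{2} - k$ fips, so we set $C = \binom{n}{2} - k$ (capping at $\frac{n(n-1)}{2}$ and noting $C \ge |E|$ in the regime that matters, adjusting the trivial boundary cases by hand).

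The steps, in order, are: (i) construct $G = G(P)$, the transitive closure dag of $P$, and note this is polynomial in the input size; (ii) record that topological sortings of $G$ are exactly linear extensions of $P$, so the two solution spaces are in bijection; (iii) prove the identity $fip(\Gamma) = \binom{n}{2} - dist(L_X,L_Y)$ by the case analysis on comparable vs.\ incomparable pairs sketched above, using Fact~1 to handle the comparable contribution; (iv) set $C = \binom{n}{2} - k$ and verify the equivalence of the yes-instances, checking that the side condition $|E| \le C \le \frac{n(n-1)}{2}$ in the WDD instance is met or that the out-of-range cases are trivially decidable.

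I expect the only real subtlety to be bookkeeping around the WDD problem's built-in constraint $|E| \le C \le \frac{|V|(|V|-1)}{2}$: when $k$ is so small that $C = \binom{n}{2} - k$ would exceed the upper limit, or when the transitive dag has so many edges that $C < |E|$, the WDD instance as literally stated is out of range, but in both extremes the LED answer is decidable in polynomial time (every pair of linear extensions trivially achieves at least the minimum distance, or no pair can achieve a distance larger than $inc(G)$), so the reduction outputs a fixed trivial yes/no instance instead. Handling these boundary cases cleanly is the main obstacle; the core combinatorial identity is straightforward.
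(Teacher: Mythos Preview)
Your approach is essentially identical to the paper's: both build the transitive dag of $P$, identify linear extensions with topological sortings, derive the identity that the intersection size equals $\binom{n}{2} - dist(L_X,L_Y)$, and set $C = \binom{n}{2} - k$ to translate between the two thresholds. One slip worth fixing: in your displayed identity the left-hand side should be $|I|$ (the intersection cardinality, which is what $C$ bounds in the WDD instance), not $fip(\Gamma)$---by the paper's definition $fip(\Gamma)$ counts only the \emph{incomparable} pairs in $I$, so $fip(\Gamma) = inc(G) - dist = |I| - |E^{*}|$; since you correctly match $C$ against the intersection size, the reduction itself is unaffected.
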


\begin{proof}
In order to reduce the Linear Extension Diameter(LED) problem to the Weak Dominance Drawing(WDD) problem
we construct a dag $G$ from a given a partial order $P$ on the set of elements $A$, as described above.
For every element $u$ of $A$ we have a vertex $u$ in $G$, and for every relation $u$$<_{P}$$v$ between two elements,
we have a directed edge ($u,v$). 

First we prove that an optimal solution to LED is an optimal solution to WDD.
Let $L_{i},L_{j}$ be two linear extensions of the given partial order $P$,  that form a diametral pair of $P$, 
which implies that their distance is $dist(L_{i},L_{j})=k$. 
Thus the linear extension diameter of this partial order is $led(P)=k$. 
This means that out of $\binom{|A|}{2}$ possible pairs of elements of linear extension $L_{i}$, $k$ of them are in the opposite order in $L_{j}$.
Let us now choose these two linear extensions as the topological sortings for WEAK DOMINANCE DRAWING and denote them as $t_{i},t_{j}$. 
Then we will have $C=\binom{|A|}{2}-k$ pairs in the intersection, due to the fact that these $C$ pairs appear in the same order on both topological sortings. Since $k$ takes its maximum value for  partial order $P$ and $\binom{|A|}{2}$ is fixed, $C$ is minimum. Thus, the minimum cardinality of the intersection set is $C$.

Next, we prove that an optimal solution to WDD is an optimal solution to LED. 
Let us assume that  $t_{i}$ and $t_{j}$ are two topological sortings  such that their intersection has minimum cardinality $C$. 
Since there are $\binom{|A|}{2}$ possible pairs of vertices that may appear in the intersection, let $k$ be the number of pairs that do not appear 
in the intersection of $t_{i},t_{j}$. Then $C=\binom{|A|}{2}-k$. Since $C$ is minimum, it implies that the value of $k$ is maximum. 
Therefore, we will interpret $t_{i},t_{j}$ as linear extensions $L_{i},L_{j}$ in the LINEAR EXTENSION DIAMETER problem. 
Since we have $k$ pairs that appear in the opposite order between $t_{i}$ and $t_{j}$, we have $dist(L_{i},L_{j})=k$. 
Considering that $k$ is the maximum value that any two topological sortings can differ in their intersection, it is also the maximum distance
between any pair of linear extensions, therefore $led(P)=k$. 
Concluding, the LINEAR EXTENSION DIAMETER problem  has a solution if and only if the WEAK DOMINANCE DRAWING FOR DAGS
problem has a solution.
\end{proof}

\begin{corollary}
The WEAK DOMINANCE DRAWING FOR DIRECTED ACYCLIC GRAPHS  problem is NP-complete.
\end{corollary}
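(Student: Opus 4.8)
The plan is to prove the two halves of \textbf{NP}-completeness independently: first that WDD belongs to \textbf{NP}, and then that it is \textbf{NP}-hard, the latter by reusing the reduction of the preceding Theorem together with the known hardness of LED.

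First I would argue membership in \textbf{NP}. Given an instance $(G,C)$, the certificate for a yes-answer is simply a pair of orderings $t_{X},t_{Y}$ of $V$. A verifier checks, in time polynomial in $|V|+|E|$, that each of $t_{X},t_{Y}$ is a bijection $V\to[1,|V|]$ that respects every edge of $E$ (hence a valid topological sorting), and then computes the cardinality of the intersection $I=\{(u,v)\mid t_{X}(u)<t_{X}(v),\,t_{Y}(u)<t_{Y}(v)\}$ by iterating over all $\binom{|V|}{2}$ vertex pairs; it accepts iff $|I|\le C$. All steps are polynomial, so $\mathrm{WDD}\in\textbf{NP}$.

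Second I would establish \textbf{NP}-hardness. I would take the reduction described in the Theorem and present it explicitly as a polynomial-time many-one reduction between the decision problems. Given an LED instance $((A,P),k)$, build the transitive dag $G=(V,E)$ with $V=A$ and $(u,v)\in E$ exactly when $u<_{P}v$; this takes time polynomial in the input size. If $0\le k\le inc(G)$, output the WDD instance $(G,C)$ with $C:=\binom{|A|}{2}-k$; otherwise (here $k>inc(G)\ge led(P)$, so LED is a trivial no-instance) output a fixed no-instance of WDD. One verifies well-formedness: since $G$ is transitive, $|E|=|E^{*}|=\binom{|A|}{2}-inc(G)\le\binom{|A|}{2}-k=C\le\binom{|V|}{2}$, so the requirement $|E|\le C\le\frac{|V|(|V|-1)}{2}$ holds. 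By the Theorem (and the fact that topological sortings of $G$ are precisely the linear extensions of $P$, with $|I|=\binom{|A|}{2}-dist$), the constructed WDD instance is a yes-instance iff the LED instance is. Since LED is \textbf{NP}-complete by Brightwell and Massow, WDD is \textbf{NP}-hard, and combined with membership in \textbf{NP} this gives the corollary.

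The step that will require the most care is converting the Theorem, which is phrased in terms of \emph{optimal} solutions and the exact values $led(P)$ and minimum intersection cardinality, into a clean decision-to-decision many-one reduction, and carrying out the accompanying bookkeeping that guarantees the emitted WDD instance is syntactically legal — in particular the inequality $|E|\le C$, which rests on the transitivity of $G$ — and the handling of the degenerate range $k>inc(G)$. The membership-in-\textbf{NP} argument, by contrast, is entirely routine.
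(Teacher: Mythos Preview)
Your proposal is correct and follows the same route the paper implicitly takes: the corollary is derived from the preceding Theorem (LED reduces to WDD) together with the Brightwell--Massow result that LED is \textbf{NP}-complete. The paper gives no separate proof of the corollary at all, so your write-up is strictly more careful than the original --- in particular, the explicit \textbf{NP}-membership argument, the verification that $|E|\le C$ via transitivity of the constructed $G$, and the handling of the degenerate range $k>inc(G)$ are details the paper simply omits.
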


\section{Conclusion and Open Problems}

An interesting open problem is to better understand the interplay between the two topological sortings in order to minimize the number of falsely implied paths. 
In this direction approximation algorithms as well as heuristic algorithms should be developed.
Finally, the minimization of falsely implied paths within the intersection of more than two topological sortings is an interesting computational problem that is still open.

\end{document}